\newtheorem{Lemma}{Lemma}
\newtheorem{prop}{Proposition}
\date{}
\begin{document}

\title{Electric Impedance Tomography problem for surfaces with internal holes}

\author{A.V. Badanin\thanks {St.Petersburg State University, St.Petersburg, Russia, e-mail: a.badanin@spbu.ru.},\,
        M.I.Belishev\thanks {St.Petersburg Department of Steklov Mathematical
        Institute, St.Petersburg, Russia, e-mail: belishev@pdmi.ras.ru; supported by RFBR grant 20-01 627A}.\,
        D.V.Korikov\thanks {St.Petersburg Department of Steklov Mathematical
        Institute, St. Petersburg, Russia, e-mail: thecakeisalie@list.ru; supported by RFBR grant 20-01 627A}.}

\maketitle

\begin{abstract}
Let $(M,g)$ be a smooth compact Riemann surface with the
multicomponent boundary
$\Gamma=\Gamma_0\cup\Gamma_1\cup\dots\cup\Gamma_m=:\Gamma_0\cup\tilde\Gamma$.
Let $u=u^f$ obey $\Delta u=0$ in $M$,
$u|_{\Gamma_0}=f,\,\,u|_{\tilde\Gamma}=0$ (the grounded holes) and
$v=v^h$ obey $\Delta v=0$ in $M$,
$v|_{\Gamma_0}=h,\,\,\partial_\nu v|_{\tilde\Gamma}=0$ (the
isolated holes). Let $\Lambda_{g}^{\rm gr}: f\mapsto\partial_\nu
u^f|_{\Gamma_{0}}$ and $\Lambda_{g}^{\rm is}: h\mapsto\partial_\nu
v^h|_{\Gamma_{0}}$ be the corresponding DN-maps. The EIT problem
is to determine $M$ from $\Lambda_{g}^{\rm gr}$ or
$\Lambda_{g}^{\rm is}$.

To solve it, an algebraic version of the BC-method is applied. The
main instrument is the algebra of holomorphic functions on the
ma\-ni\-fold ${\mathbb M}$, which is obtained by gluing two
examples of $M$ along $\tilde{\Gamma}$. We show that this algebra
is determined by $\Lambda_{g}^{\rm gr}$ (or $\Lambda_{g}^{\rm
is}$) up to isometric isomorphism. Its Gelfand spectrum (the set
of characters) plays the role of the material for constructing a
relevant copy $(M',g',\Gamma_{0}')$ of $(M,g,\Gamma_{0})$. This
copy is conformally equivalent to the original, provides
$\Gamma_{0}'=\Gamma_{0},\,\,\Lambda_{g'}^{\rm gr}=\Lambda_{g}^{\rm
gr},\,\,\Lambda_{g'}^{\rm is}=\Lambda_{g}^{\rm is}$, and thus
solves the problem.
\end{abstract}

\noindent{\bf Key words:}\,\,\,determination of Riemann surface
from its DN-map, algebraic version of Boundary Control method.

\noindent{\bf MSC:}\,\,\,35R30, 46J15, 46J20, 30F15.
\bigskip

\subsubsection*{About the paper}
\noindent$\bullet$\,\,\,The fact that the Dirichlet-to-Neumann map
determines the Riemannian surface with boundary up to conformal
equivalence, is now well known \cite{LUEns,BCald,HMich} and first
established in \cite{LUEns}. The approach
\cite{BCald,BKor_JIIPP,{BKor_char_arXiv}} is based on connections
between the EIT problem and Banach algebras of holomorphic
functions. The goal of our paper is to extend this approach to
surfaces with a multicomponent boundary $\Gamma$, provided that
only one connected component $\Gamma_{0}$ of $\Gamma$ is
accessible for the measurements and observations.
\smallskip

\noindent$\bullet$\,\,\,Determination of unknown boundaries is a
well-known problem, and the literature devoted to it is hardly
observable (see, e.g., \cite{Aless1,Aless2}). The specific feature
of the statement that we deal with, is the following. In the
mentioned papers, the surface/domain $M$ and its external boundary
$\Gamma\subset\partial M$ are assumed to be {\it given}, as well
as the parameters of the media (conductivity, density, etc), which
fills $M$. Roughly speaking, we know the surface but do not know
the holes into it. The aim is to determine $\partial
M\setminus\Gamma$ (the holes) from the measurements on $\Gamma$.
In contrast to such a traditional set up, we do not assume the
surface to be known and recover {\it the surface together with
holes} on it.

\subsubsection*{Statement of problem and results}

\noindent$\bullet$\,\,\,Let $(M,g)$ be a compact oriented Riemann
surface with the boundary
$\Gamma=\Gamma_0\cup\Gamma_1\cup\dots\cup\Gamma_m=:\Gamma_0\cup\tilde\Gamma$,
each $\Gamma_j$ being diffeomorphic to a circle in
$\mathbb{R}^{2}$; $g$ the smooth \footnote{everywhere in the
paper, {\it smooth} means $C^{\infty}$-smooth} metric tensor;
$\Delta_{g}$ the Beltrami-Laplace operator.

Consider the problem
\begin{equation}
\label{problem gr} \Delta_{g} u=0\,\,\, \text{ in }\,\, {\rm
int}M,\qquad u=f\,\,\,\text{ on } \Gamma_{0}, \qquad u=0
\,\,\,\text{ on } \tilde{\Gamma}
\end{equation}
(${\rm int\,}M:=M\setminus\Gamma$) and denote by $u^{f}$ the
solution of (\ref{problem gr}) for a smooth $f$. With problem
(\ref{problem gr}) one associates the Dirichlet-to-Neumann map
$\Lambda_{g}^{\rm gr}: \ f\rightarrow
\partial_{\nu}u^{f}|_{\Gamma_{0}}$, where $\nu$ is the outward
normal.

The second problem under consideration is
\begin{equation}
\label{problem is} \Delta_{g} v=0\,\,\, \text{ in }\,\, {\rm
int}M,\qquad v=h \,\,\text{ on }\, \Gamma_{0}, \qquad
\partial_{\nu}v=0\,\, \text{ on }\, \tilde{\Gamma};
\end{equation}
let $v^{h}$ be the solution for a smooth $h$. The DN map
associated with (\ref{problem is}) is $\Lambda_{g}^{\rm is}: \
h\rightarrow
\partial_{\nu}v^{h}|_{\Gamma_{0}}$.
\smallskip

\noindent$\bullet$\,\,\,Assume that one of the DN maps
$\Lambda_{g}^{\rm gr}$ or $\Lambda_{g}^{\rm is}$ is known; the
Electric Impedance Tomography problem is to construct a relevant
copy $(M',g',\Gamma_{0}')$ of $(M,g,\Gamma_{0})$, which is
provides $\Gamma_{0}'=\Gamma_{0}$ and $\Lambda_{g'}=\Lambda_{g}$.
Note that to construct such a copy is the only relevant
understanding of `to solve the EIT problem for the {\it unknown}
manifold' \cite{BCald,B Sobolev Geom Rings,B UMN}.

To solve the EIT problem, we apply the algebraic approach, which
was proposed in \cite{BCald} for surfaces with one-component
boundary. As well as in \cite{BCald}, the main tool for solving
the problem is an algebra of the boundary values of functions
holomorphic in a manifold. However, in the present paper, the
manifold is not $M$ but ${\mathbb M}$, which is obtained by gluing
two examples of $M$ along the `inner' boundary $\tilde{\Gamma}$.
We show that the corresponding algebra is determined by
$\Lambda_{g}^{\rm gr}$ or $\Lambda_{g}^{\rm is}$ up to the
isometric isomorphism. Its Gelfand spectrum (the set of
characters) plays the role of the material for constructing a
relevant copy $(M',g',\Gamma_{0}')$ of $(M,g,\Gamma_{0})$. This
copy is conformally equivalent to the original and thus solves the
problem.

\subsubsection*{Preliminaries}
\noindent$\bullet$\,\,\, Recall that $\Lambda_{g}^{\rm gr}$ and
$\Lambda_{g}^{\rm is}$ are the positive selfadjoint 1-st order
pseudo-diffe\-ren\-ti\-al operators in the {\it real} space
$L_2(\Gamma_{0})$ defined on ${\rm Dom\,}\Lambda_{g}^{\rm gr}={\rm
Dom\,}\Lambda_{g}^{\rm is}=H^1(\Gamma_{0})$. The metric on
$\Gamma_{0}$ can be determined from the principal symbol of
$\Lambda_{g}^{\rm gr}$ or $\Lambda_{g}^{\rm is}$ \cite{Taylor}.
Therefore, in the subsequent we assume that length element $ds$ on
$\Gamma_{0}$ is known and a continuous tangent field of unit
vectors $\gamma$ on $\Gamma_0$ is chosen. For functions on the
boundary, by $\partial_\gamma f$ we denote the derivative with
respect to the length $s$ in direction $\gamma$. Define
 $$
\dot{L}_{2}(\Gamma_{0}):=\linebreak\{f\in
L_2(\Gamma_{0})\,|\,\,\int_{\Gamma_{0}} f\,ds=0\}, \quad
\dot{C}^{\infty}(\Gamma_{0})=C^{\infty}(\Gamma_{0})\cap\dot{L}_{2}(\Gamma_{0}).
 $$
It is easy to see that ${\rm Ker\,}\Lambda_{g}^{\rm is}=\{{\rm
const}\}$ and ${\rm Ran\,}\Lambda_{g}^{\rm
is}=\dot{L}_{2}(\Gamma_{0})$. If $\tilde{\Gamma}\ne\varnothing$,
then the relations ${\rm Ker\,}\Lambda_{g}^{\rm gr}=\{0\}$ and
${\rm Ran\,}\Lambda_{g}^{\rm gr}=\dot L_{2}(\Gamma_{0})$ hold.
Note that $\Lambda_{g}^{\rm gr}=\Lambda_{g}^{\rm is}$ for
$\tilde{\Gamma}=\varnothing$.

\noindent{$\bullet$}\,\,\ There are two orientations on $M$ and
each of them corresponds to a continuous family of rotations $M\ni
x\rightarrow\Phi(x)\in {\rm End\,}T_{x}M$ such that
$$
\langle\Phi(x)\,a,\Phi(x)\,b\rangle=\langle a,b\rangle, \quad
\langle\Phi(x)\,a,a\rangle=0, \qquad a,b\in T_{x}M, \,\, x\in M.
$$
Note that $\Phi(x)^{2}=-{\rm id}$, $x\in M$. Let us fix the
orientation  (and, hence, $\Phi$) by the rule
$\Phi(\gamma)\nu=\gamma$. Then such a rule determines the tangent
field of unit vectors $\gamma$ on each component $\Gamma_{j}$,
$j=1,\dots,m$.

The (real) functions $u,u^{\dag}\in C(M)\cap C^\infty({\rm
int\,}M)$ are {\it conjugate} by Cauchy-Riemann if they satisfy
 \begin{equation}\label{def CR}
\nabla u^{\dag}=\Phi\nabla u\qquad\text{in}\,\,\,\,{\rm int}M.
 \end{equation}
In such a case, the complex-valued function $w=u+iu^{\dag}$ is
{\it holomorphic} in ${\rm int}M$. Obviously, $u$ and $u^{\dag}$
are harmonic in ${\rm int}M$ and $u+c_{1}$, $u^{\dag}+c_{2}$ are
conjugate for any $c_{1},c_{2}\in\mathbb{R}$. The class of
functions $u\in C^{\infty}(M)$ which have a conjugate $u^{\dag}\in
C^{\infty}(M)$ is infinite-dimensional.

\noindent{$\bullet$}\,\,\ Introduce the integration $J: \
\dot{L}_{2}(\Gamma_{0})\rightarrow\dot{L}_{2}(\Gamma_{0})$ by
$\partial_{\gamma}J={\rm id}$. Note that $J\partial_{\gamma}={\rm
id}$ on $C^\infty(\Gamma_0)$.

Suppose that $\tilde{\Gamma}=\varnothing$, whereas $u,u^{\dag}\in
C^{\infty}(M)$ are conjugate. Then (\ref{def CR}) implies
$\partial_{\nu}u=\partial_{\gamma}u^{\dag}$ and
$\partial_{\gamma}u=-\partial_{\nu}u^{\dag}$. Since $u+c_{1}$,
$u^{\dag}+c_{2}$ are also conjugate, we can chose $u$ and
$u^{\dag}$ in such a way that their traces $f:=u|_{\Gamma_{0}}$
and $p=u^{\dag}|_{\Gamma_{0}}$ belong to
$\dot{L}_{2}(\Gamma_{0})$. Then, denoting by $\Lambda_{g}$ the
operator $\Lambda_{g}^{\rm gr}=\Lambda_{g}^{\rm is}$, one has
$\Lambda_{g}f=\partial_{\gamma}p$,
$\partial_{\gamma}f=-\Lambda_{g}p$ and obtains
 $$
[I+(\Lambda_{g}J)^{2}]\partial_{\gamma}f=\partial_{\gamma}f+\Lambda_{g}J\Lambda_g
f=\partial_{\gamma}f+\Lambda_{g}p=0,
 $$
and hence ${\rm Ker\,}[I+(\Lambda_{g}J)^{2}]\not=\{0\}$. Moreover,
${\rm dim\,}{\rm Ker\,}[I+(\Lambda_{g}J)^{2}]=\infty$ holds: see
\cite{BCald}.

Suppose that $\tilde{\Gamma}\ne\varnothing$ and
$[I+(\Lambda_{g}J)^{2}]k=0$, where $k\in
\dot{C}^{\infty}(\Gamma_{0})$ and $\Lambda_{g}=\Lambda_{g}^{\rm
gr}$ or $\Lambda_{g}^{\rm is}$. Then for $u=u^{Jk}$ and
$u'=u^{(\Lambda_{g}J)^{2}k}$ one has
$\partial_{\nu}u=\partial_{\gamma}u'$ and
$\partial_{\gamma}u=-\partial_{\nu}u'$ on $\Gamma_{0}$. Thus,
$\nabla u'=\Phi\nabla u$ holds on $\Gamma_{0}$. By the Poincare
Theorem, for any neighborhood $U\subset M$ homeomorphic to a disk
in $\mathbb R^2$ and such that $\partial U\cap\Gamma_{0}$ contains
a segment $\Gamma'$ of positive length, there exists, a conjugate
function $u''$ such that $\nabla u''=\Phi\nabla u$ in $U$.
Therefore,
$\partial_{\gamma}u'=\partial_{\nu}u=\partial_{\gamma}u''$ and
$-\partial_{\nu}u'=\partial_{\gamma}u=-\partial_{\nu}u''$ on
$\Gamma'$. Due to the uniqueness of the solution to the Cauchy
problem for the second order elliptic equations, one has
$u'=u''+{\rm const}$ on $U$, whence $\nabla u'=\Phi\nabla u$ in
$U$. Since $U$ is arbitrary, the function $w:=u+iu'$ is
holomorphic in ${\rm int}M$. If $\Lambda_{g}=\Lambda_{g}^{\rm
gr}$, then $w=0$ on $\tilde{\Gamma}$ and, due to analyticity,
$w=0$ on $M$. If $\Lambda_{g}=\Lambda_{g}^{\rm is}$, then
$\partial_{\gamma}u'=\partial_{\nu}u=0$ and
$\partial_{\gamma}u=-\partial_{\nu}u'=0$ on $\tilde{\Gamma}$.
Therefore $w={\rm const}$ on $\tilde\Gamma$ and, by analyticity,
$w={\rm const}$ on $M$. In both cases, $Jk=u|_{\Gamma_{0}}={\rm
const}$ and hence $k=0$.

So, for a given operator $\Lambda_{g}=\Lambda_{g}^{\rm gr}$ or
$\Lambda_{g}^{\rm is}$, only the following three alternative cases
are realizable:
\begin{itemize}
\item[-] If the equation $[I+(\Lambda_{g}J)^{2}]k=0$ has
infinitely many (linearly independent) smooth solutions, then
$\tilde{\Gamma}=\varnothing$ and $\Lambda_{g}=\Lambda_{g}^{\rm
gr}=\Lambda_{g}^{\rm is}$. \item[-] If $\Lambda_{g}1\ne 0$ and
$[I+(\Lambda_{g}J)^{2}]k\ne 0$ for any nonzero
$k\in\dot{C}^{\infty}(\Gamma_{0})$, then
$\tilde{\Gamma}\ne\varnothing$ and $\Lambda_{g}=\Lambda_{g}^{\rm
gr}$. \item[-] If $\Lambda_{g}1=0$ and $[I+(\Lambda_{g}J)^{2}]k\ne
0$ for any nonzero $k\in\dot{C}^{\infty}(\Gamma_{0})$, then
$\tilde{\Gamma}\ne\varnothing$ and $\Lambda_{g}=\Lambda_{g}^{\rm
is}$.
\end{itemize}

\subsubsection*{Algebras on $M$}
\noindent{$\bullet$}\,\,\ From now on, we suppose that
$\tilde{\Gamma}\ne\varnothing$, referring the reader to
\cite{BCald} for the case $\tilde{\Gamma}=\varnothing$. Let
\begin{equation}\label{hermelll}
\mathfrak{A}_{*}(M):=\{w=v+iu \ |\,\, u,v\in C(M), \ \nabla
u=\Phi\nabla v \text{ in } {\rm int}M, \ u=0 \text{ on }
\tilde{\Gamma}\}
\end{equation}
be the set of holomorphic continuous functions with real traces on
$\tilde{\Gamma}$; we also denote
$\mathfrak{A}_{*}^{\infty}(M):=\mathfrak{A}_{*}(M)\cap
C^{\infty}(M;\mathbb{C})$. Note that ${\rm
clos\,}\mathfrak{A}_{*}^{\infty}(M)=\mathfrak{A}_{*}(M)$ (the
closure in $C^{\infty}(M;\mathbb{C})$).

Also, note that $\mathfrak{A}_{*}(M)$ is not a {\it complex}
algebra because $i\,\mathfrak{A}_{*}(M)\not=\mathfrak{A}_{*}(M)$.
At the same time, it is easy to check that
 \begin{equation}\label{def mathbb A(M)}
\mathbb{A}(M):=\mathfrak{A}_{*}(M)+i\,\mathfrak{A}_{*}(M)=\{a+ib\,|\,\,a,b\in\mathfrak{A}_{*}(M)\}
 \end{equation}
{\it is an algebra} in $C(M;\mathbb{C})$. Its smooth subalgebra
$\mathbb{A}^{\infty}(M):=\mathbb{A}(M)\cap
C^{\infty}(M;\mathbb{C})$ is dense in $\mathbb A(M)$.

For any element $\zeta\in\mathbb{A}(M)$, the representation
$\zeta=w_{1}+iw_{2}$ with $w_{1},w_{2}\in\mathfrak{A}_{*}(M)$ is
unique. Indeed, if $w_{1}+iw_{2}=w_{3}+iw_{4}$ with
$w_{k}\in\mathfrak{A}_{*}(M)$, then
$\tilde{w}:=w_{1}-w_{3}=i(w_{4}-w_{2})$ is an element of
$\mathfrak{A}_{*}(M)\cap i\mathfrak{A}_{*}(M)$. Hence,
$\tilde{w}=0$ on $\tilde{\Gamma}$ and, by analyticity,
$\tilde{w}=0$ on $M$. Thus, $w_{1}=w_{3}$ and $w_{2}=w_{4}$.

Also, one can simply verify that the map
\begin{equation}
\label{invdef}
w_{1}+iw_{2}\rightarrow(w_{1}+iw_{2})^{*}:=w_{1}-iw_{2}, \qquad w_{1},w_{2}\in\mathfrak{A}_{*}(M)
\end{equation}
obeys $(\zeta_{1}\zeta_{2})^{*}=\zeta_{2}^{*}\zeta_{1}^{*}$ and,
hence, is an involution on the algebra $\mathbb{A}(M)$.

For $\zeta\in\mathbb{A}(M)$, we put
\begin{equation}
\label{normdef}
|||\zeta|||:=\max\,\{\parallel\zeta\parallel_{C(M;\mathbb{C})},\parallel\zeta^{*}\parallel_{C(M;\mathbb{C})}\}
\end{equation}
and see that $|||\zeta|||=|||\zeta^{*}|||$ holds. Then, easily
checking the property
$|||\zeta\eta|||\leqslant|||\zeta|||\,|||\eta|||$, we conclude
that $\{\mathbb{A}(M),|||\cdot |||,*\}$ is an involutive Banach
algebra.
\smallskip

\noindent{$\bullet$}\,\,\, In accordance with the uniqueness of
analytic continuation, the trace map
\begin{equation}
\label{tr 1} \mathbb{A}(M)\ni \zeta\stackrel{{\rm
Tr}_{\,\Gamma_{0}}}{\rightarrow}\zeta|_{\Gamma_{0}}\,\in
C(\Gamma_0;\mathbb C)
\end{equation}
is an isomorphism of the algebras $\mathbb{A}(M)$ and ${\rm
Tr}_{\,\Gamma_{0}}\mathbb{A}(M)$. This isomorphism determines the
involution of the trace algebra
\begin{equation}
\label{invdef+}
\eta_{1}+i\eta_{2}\rightarrow(\eta_{1}+i\eta_{2})^{*}:=\eta_{1}-i\eta_{2},
\qquad \eta_{1},\eta_{2}\in{\rm Tr}_{\Gamma_0}\mathfrak{A}_{*}(M).
\end{equation}
In what follows, we prove that ${\rm Tr}_{\,\Gamma_{0}}$ is an
isometry between $\mathbb{A}(M)$ and ${\rm
Tr}_{\,\Gamma_{0}}\mathbb{A}(M)$, where the norm in ${\rm
Tr}_{\,\Gamma_{0}}\mathbb{A}(M)$ is defined by (\ref{normdef})
with $\Gamma_{0}$ instead of $M$.

Now, let us show that the algebra ${\rm
Tr}_{\Gamma_{0}}\mathbb{A}(M)\subset C(\Gamma_0;\mathbb C)$ is
determined by the (known) DN map $\Lambda_{g}^{\rm gr}$ or
$\Lambda_{g}^{\rm is}$.
\begin{Lemma}
\label{criteria grounded prop}
For $f\in C^{\infty}(\Gamma_{0})$ the following conditions are equivalent:
\begin{enumerate}
\item $u^{f}\in \Re(i\,\mathfrak{A}_{*}^{\infty}(M))$; \item
\begin{equation} \label{criteria grounded} 2\Lambda_{g}^{\rm gr}(f\cdot J\Lambda_{g}^{\rm gr}f)=\partial_{\gamma}[(J\Lambda_{g}^{\rm gr}f)^{2}-f^{2}].
\end{equation}
\end{enumerate}
\end{Lemma}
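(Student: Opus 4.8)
The plan is to read both conditions through boundary traces and to pass between them by squaring a holomorphic function and, in the reverse direction, by extracting a holomorphic square root. First I would record the dictionary. If $w=v+iu\in\mathfrak{A}_*^\infty(M)$ then $\Re(iw)=-u$, so the inclusion $u^f\in\Re(i\,\mathfrak{A}_*^\infty(M))$ means exactly that $u^f$ has a single-valued smooth harmonic conjugate, i.e. that $w:=v-iu^f$ is a single-valued holomorphic function on $M$; the requirement $\Im w=-u^f=0$ on $\tilde\Gamma$ is automatic. The Cauchy--Riemann relation $\nabla u=\Phi\nabla v$ gives, on $\Gamma_0$, $\partial_\gamma v=-\partial_\nu u=\partial_\nu u^f=\Lambda_g^{\rm gr}f$, whence $w|_{\Gamma_0}=p-if$ with $p:=J\Lambda_g^{\rm gr}f$; a short check shows \eqref{criteria grounded} is unchanged under $v\mapsto v+\mathrm{const}$, so this normalization is harmless. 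I would also isolate the two facts used below: for every $W=V+iU\in\mathfrak{A}_*(M)$ one has $\partial_\gamma V=-\partial_\nu U$ on $\Gamma_0$, and, since $U=0$ on $\tilde\Gamma$, the function $U$ solves the grounded problem, so $\partial_\nu U|_{\Gamma_0}=\Lambda_g^{\rm gr}(U|_{\Gamma_0})$.

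For the forward implication I would simply square. If $w=v-iu^f\in\mathfrak{A}_*^\infty(M)$ then $w^2$ is holomorphic and smooth, and $\Im(w^2)=-2vu^f$ vanishes on $\tilde\Gamma$ because $u^f$ does; hence $w^2\in\mathfrak{A}_*^\infty(M)$. Writing $w^2=V+iU$ with $V|_{\Gamma_0}=p^2-f^2$ and $U|_{\Gamma_0}=-2fp$, the two boundary facts give
\begin{equation*}
\partial_\gamma(p^2-f^2)=\partial_\gamma V=-\partial_\nu U=-\Lambda_g^{\rm gr}(-2fp)=2\Lambda_g^{\rm gr}\bigl(f\cdot J\Lambda_g^{\rm gr}f\bigr),
\end{equation*}
which is \eqref{criteria grounded}. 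This half is routine.

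For the converse I would run the same computation backwards and then deal with global single-valuedness. The identity \eqref{criteria grounded} says precisely that the pair $(p^2-f^2,\,-2fp)$ obeys the compatibility $\partial_\gamma(p^2-f^2)=-\Lambda_g^{\rm gr}(-2fp)$ on $\Gamma_0$, i.e. the relation that a $\Gamma_0$-trace of an element of $\mathfrak{A}_*(M)$ must satisfy. Using the local-conjugate argument already employed in the Preliminaries (Poincar\'e's theorem in a disk meeting $\Gamma_0$ in an arc, matching by uniqueness for the Cauchy problem of the conjugacy system, then analytic continuation), this compatibility produces a holomorphic function $W$ near $\Gamma_0$ with $W|_{\Gamma_0}=(p-if)^2$ and with real trace on $\tilde\Gamma$. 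The remaining and decisive task is to continue $W$ to a single-valued element of $\mathfrak{A}_*^\infty(M)$ and to extract from it a single-valued holomorphic square root $w$; granting this, $w\in\mathfrak{A}_*^\infty(M)$, $w|_{\Gamma_0}=p-if$, and $\Re(iw)$ is a grounded harmonic function with trace $f$ on $\Gamma_0$, hence equals $u^f$, which is the desired inclusion.

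The main obstacle is precisely these single-valuedness questions on the multiply connected surface $M$. A single-valued $\sqrt{W}$ exists only if the winding of $W$ about $0$ is even along every cycle and all zeros of $W$ have even order. I expect the geometry of $\mathfrak{A}_*(M)$ to furnish this: along $\Gamma_0$ the trace $(p-if)^2$ is a perfect square, so the winding of $W$ about $0$ there is even, while along each component of $\tilde\Gamma$ the imaginary part of $W$ vanishes, so $W$ is real and its winding about $0$ is zero. These two facts are the mechanism converting the quadratic identity \eqref{criteria grounded} into the linear conclusion; pinning down the interior zeros and the handle periods of $W$ so as to guarantee single-valuedness of $W$ and of $\sqrt{W}$ is the step I expect to demand the most care.
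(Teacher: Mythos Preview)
Your forward implication is correct and is essentially the paper's argument: square $w$, observe that $\Im w^{2}=-2vu^{f}$ vanishes on $\tilde\Gamma$ and hence solves the grounded problem, and read off \eqref{criteria grounded} from the Cauchy--Riemann relations on $\Gamma_{0}$.

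For the converse your local set-up is right, but the route you then propose does not close, and the difficulty you flag is not merely ``the step demanding the most care'' --- it is the entire content of the implication, and your argument gives no leverage on it. You want to globalize a holomorphic $W$ with $W|_{\Gamma_{0}}=(p-if)^{2}$ and then extract a holomorphic square root. Both are genuine obstructions on a surface with handles and several boundary circles. Globalizing $W$ amounts to showing that the grounded harmonic function $u^{-2fp}$ has a single-valued conjugate on $M$; nothing in \eqref{criteria grounded}, which is a relation on $\Gamma_{0}$ alone, controls the handle periods of $\ast\, du^{-2fp}$. Even granting a global $W$, the parity of its winding about $0$ along handle cycles and the orders of its interior zeros are again invisible from $\Gamma_{0}$, so there is no mechanism in your scheme forcing $\sqrt{W}$ to be single-valued.

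The paper bypasses both obstructions by a different and much simpler device. Having built, exactly as you do, a local conjugate $v_{U}$ of $u^{f}$ in a disk $U$ touching $\Gamma_{0}$ along an arc $\Gamma'$, one checks from \eqref{criteria grounded} that $u^{f}v_{U}$ and $u^{f\cdot p}$ share Cauchy data on $\Gamma'$, and unique continuation gives $u^{f}v_{U}=u^{f\cdot p}$ in $U$. The point is that $u^{f\cdot p}$ is \emph{already a globally defined smooth function on $M$}: it is the grounded harmonic extension of $f\cdot p$. Hence the single formula
\[
w\;:=\;u^{f}+i\,\frac{u^{f\cdot p}}{u^{f}}
\]
is globally well defined off the zero set of $u^{f}$ and, in every such disk $U$, coincides with the holomorphic $w_{U}=u^{f}+iv_{U}$; since $w_{U}$ is holomorphic across the (isolated) zeros of $u^{f}$, so is $w$. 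No periods, no roots: the global conjugate is obtained as a \emph{quotient of two globally defined harmonic functions}, not as a primitive or a square root. Finally $\Re w=u^{f}=0$ on $\tilde\Gamma$ gives $w\in i\,\mathfrak{A}_{*}^{\infty}(M)$. If you want to salvage your approach, note that the identity $u^{f}v_{U}=u^{f\cdot p}$ is precisely what would let you write $V=(u^{f\cdot p}/u^{f})^{2}-(u^{f})^{2}$ globally --- but once you have that identity, the detour through $W$ and $\sqrt{W}$ is unnecessary.
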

\begin{proof}

${\bf 1.\Rightarrow 2.}$\,\,\,Let $w=u^{f}+iv$ belong to
$i\,\mathfrak{A}_{*}^{\infty}(M)$; then $u^{f}=0$ on
$\tilde{\Gamma}$. Denote $h=v|_{\Gamma_{0}}$. The function $w^{2}$
is holomorphic in $M$ and, hence, $\Im w^{2}=2u^{f}v$ is harmonic
in $M$. Moreover, $2u^{f}v=0$ on $\tilde{\Gamma}$. Thus,
$2u^{f}v=u^{2fh}$ and
\begin{equation}
\label{squaring trick 1} 2\Lambda_{g}^{\rm
gr}(fh)=2\partial_{\nu}(u^{f}v)=2(f\partial_{\nu}v+h\Lambda_{g}^{\rm
gr}f).
\end{equation}
In addition, the Cauchy-Riemann conditions $\Phi\nabla
u^{f}=\nabla v$ in $M$ imply
\begin{equation}
\label{CR1} \Lambda_{g}^{\rm
gr}f=\partial_{\nu}u^{f}=\partial_{\gamma}h, \quad
\partial_{\nu}v=-\partial_{\gamma}f \quad \text{ on } \Gamma_{0}.
\end{equation}
Integrating the first equality of (\ref{CR1}), one gets
$h=J\Lambda_{g}^{\rm gr}f+c$ with a constant $c$. Note that,
replacing $v$ by $v-c$, one preserves the inclusion
$w\in\mathfrak{A}_{*}^{\infty}(M)$. Thus, we can choose $v$ in
such a way that
\begin{equation}
\label{tr1} h=J\Lambda_{g}^{\rm gr}f
\end{equation}
Substituting (\ref{tr1}) and the second equality of (\ref{CR1}) in
(\ref{squaring trick 1}), one obtains
$$2\Lambda_{g}^{\rm gr}(f\cdot J\Lambda_{g}^{\rm gr}f)=2(\Lambda_{g}^{\rm gr}f\cdot J\Lambda_{g}^{\rm gr}f-f\partial_{\gamma}f)=\partial_{\gamma}[(J\Lambda_{g}^{\rm gr}f)^{2}-f^{2}].$$

${\bf 2.\Rightarrow 1.}$\,\,\, Choose a segment
$\Gamma'\subset\Gamma_{0}$ of positive length and an arbitrary
neighbourhood $U\subset M$ diffeomorphic to a disc in
$\mathbb{R}^{2}$ provided $\partial U\supset\Gamma'$. By the
Poincare Theorem, there is a harmonic in $U$ function $v_{U}$
obeying $\Phi\nabla {\rm u}^{\rm f}=\nabla v_{U}$ in $U$; such a
function is defined up to a constant. In particular,
\begin{equation}
\label{CR2}
\partial_{\gamma}v_{U}=\partial_{\nu}u^{f}=\Lambda_{g}^{\rm gr}f, \quad \partial_{\nu}v_{U}=-\partial_{\gamma}f \quad \text{ on } \Gamma'.
\end{equation}
The first equality of (\ref{CR2}) shows that $v_{U}$ can be chosen
in such a way that $v_{U}=J\Lambda_{g}^{\rm gr}f$ on $\Gamma'$.
The latter, along with the second equality of (\ref{CR2}), implies
\begin{equation}
\label{cdata 1}
\partial_{\nu}(u^{f}v_{U})=u^{f}\partial_{\nu}v_{U}+v_{U}\partial_{\nu}u^{f}=J\Lambda_{g}^{\rm gr}f\cdot\Lambda_{g}^{\rm gr}f-f\partial_{\gamma}f=\frac{1}{2}\,\partial_{\gamma}[(J\Lambda_{g}^{\rm gr}f)^{2}-f^{2}].
\end{equation}
Comparing (\ref{cdata 1}) with condition $2.$, we obtain
$$\partial_{\nu}(u^{f}v_{U})=\Lambda_{g}^{\rm gr}(f\cdot J\Lambda_{g}^{\rm gr}f)=\partial_{\nu}u^{f\cdot J\Lambda_{g}^{\rm gr}f}.$$
Since $w_{U}:=u^{f}+iv_{U}$ is holomorphic in $U$, the function
$w^{2}_{U}$ is also holomorphic in $U$ and, hence,
$\frac{1}{2}\,\Im w^{2}_{U}=u^{f}v_{U}$ is harmonic in $U$. So,
$u^{f}v_{U}$ and $u^{f\cdot J\Lambda_{g}^{\rm gr}f}$ are both
harmonic in $U$ and have the same Cauchy data on $\Gamma'$. Thus,
$u^{f}v_{U}=u^{f\cdot J\Lambda_{g}^{\rm gr}f}$ in $U$ due to the
uniqueness of the solution to the Cauchy problem for the second
order elliptic equations. Therefore, outside the (possible) zeros
of $u^f$ in $U$, the function
$$w:=u^{f}+i\,\frac{u^{f\cdot J\Lambda_{g}^{\rm gr}f}}{u^{f}}$$
coincides with the function $w_{U}$ holomorphic in $U$. Since $U$
is arbitrary, $w$ is holomorphic in $M$. Moreover, $\Re w=u^{f}=0$
on $\tilde{\Gamma}$ and $w\in i\,\mathfrak{A}_{*}^{\infty}(M)$.
The latter means that $u^{f}\in
\Re(i\,\mathfrak{A}_{*}^{\infty}(M))$ holds.
\end{proof}
As a corollary of Lemma \ref{criteria grounded prop}, we represent
\begin{equation}
\label{hol func via DN gr} {\rm
Tr}_{\Gamma_{0}}\mathfrak{A}_{*}^{\infty}(M)=\{\eta=J\Lambda_{g}^{\rm
gr}f-if+c \ | \ f\in C^{\infty}(\Gamma_{0}) \text{ obeys
(\ref{criteria grounded})}, \ c\in\mathbb{R}\}
\end{equation}
and conclude that the algebra ${\rm Tr}_{\Gamma_{0}}\mathbb{A}(M)$
is determined by the DN map $\Lambda_{g}^{\rm gr}$.
\smallskip

\noindent{$\bullet$}\,\,\,Now, let us show that the set ${\rm
Tr}_{\,\Gamma_{0}}\mathfrak{A}_{*}^{\infty}(M)$ is also determined
by the DN map $\Lambda_{g}^{\rm is}$.
\begin{Lemma}
\label{criteria isolated prop}
For $h\in C^{\infty}(\Gamma_{0})$ the following conditions are equivalent:
\begin{enumerate}
\item $v^{h}\in \Re(\mathfrak{A}_{*}^{\infty}(M))$;
\item there is a (real) number $c_{h}$ such that
\begin{equation}
\label{criteria isolated} \frac{1}{2}\Lambda_{g}^{\rm
is}[h^{2}-(J\Lambda_{g}^{\rm is}h)^{2}]-h\Lambda_{g}^{\rm
is}h-J\Lambda_{g}^{\rm
is}h\cdot\partial_{\gamma}h=c_{h}(\partial_{\gamma}+\Lambda_{g}^{\rm
is}J\Lambda_{g}^{\rm is})h\quad \text{ on }\, \Gamma_{0}.
\end{equation}
\end{enumerate}
\end{Lemma}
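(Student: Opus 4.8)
The plan is to run the argument of Lemma~\ref{criteria grounded prop} with the roles of $\Re w^2$ and $\Im w^2$ interchanged, so that it is the \emph{isolated} (Neumann) condition on $\tilde\Gamma$ that is reproduced under squaring. The case of constant $h$ is immediate (then $v^h\equiv h$, $w\equiv h\in\mathfrak{A}_{*}^{\infty}(M)$, and \eqref{criteria isolated} reduces to $0=0$), so I may assume $h$ nonconstant. For $1.\Rightarrow 2.$, write $w=v^h+iu\in\mathfrak{A}_{*}^{\infty}(M)$, so that $u=0$ on $\tilde\Gamma$ and $\nabla u=\Phi\nabla v^h$. The Cauchy--Riemann relations on $\Gamma_0$ read $\partial_\gamma u=\partial_\nu v^h=\Lambda_g^{\rm is}h$ and $\partial_\nu u=-\partial_\gamma h$; integrating the first gives $p:=u|_{\Gamma_0}=J\Lambda_g^{\rm is}h+c_h$ for some constant $c_h$. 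Since $w^2$ is holomorphic, $\Re w^2=(v^h)^2-u^2$ is harmonic, and on $\tilde\Gamma$ one has $\partial_\nu[(v^h)^2-u^2]=2v^h\partial_\nu v^h-2u\partial_\nu u=0$, both factors vanishing there. Hence $\Re w^2$ is the isolated solution $v^\psi$ with $\psi:=h^2-p^2$. Reading the Neumann trace, $\Lambda_g^{\rm is}\psi=\partial_\nu\Re w^2|_{\Gamma_0}=2h\Lambda_g^{\rm is}h+2p\,\partial_\gamma h$; substituting $p=J\Lambda_g^{\rm is}h+c_h$ and using $\Lambda_g^{\rm is}1=0$ rearranges this termwise into \eqref{criteria isolated}.

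For $2.\Rightarrow 1.$ I would first localize exactly as in Lemma~\ref{criteria grounded prop}: on a disc-like $U$ with $\partial U\supset\Gamma'\subset\Gamma_0$, take a Poincar\'e conjugate $u_U$ of $v^h$, normalized so that $u_U=J\Lambda_g^{\rm is}h+c_h$ on $\Gamma'$ (possible since $\partial_\gamma u_U=\partial_\nu v^h=\Lambda_g^{\rm is}h$ there). Then $(v^h)^2-u_U^2$ and the isolated solution $v^\psi$ are harmonic in $U$ with the same Dirichlet trace $\psi=h^2-p^2$ on $\Gamma'$, and \eqref{criteria isolated} is precisely the statement that their Neumann traces agree there as well; Cauchy uniqueness then gives $u_U^2=(v^h)^2-v^\psi=:G$ in $U$, with $G$ a single globally defined function on $M$.

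The hard part is the passage from this quadratic relation to a genuine global holomorphic $w$: in Lemma~\ref{criteria grounded prop} the conjugate was recovered \emph{linearly} (as $u^{f\cdot J\Lambda_g^{\rm gr}f}/u^{f}$), whereas here $u$ enters only through $u^2=G$, so one must separately exclude multivaluedness of the conjugate and sign ambiguity. I would resolve both by continuing the identity $\Re w^2=v^\psi$. Near $\Gamma_0$ the conjugate $u$ is single-valued, since the period $\oint_{\Gamma_0}du=\oint_{\Gamma_0}\partial_\nu v^h\,ds$ vanishes (the total flux of the harmonic $v^h$ is zero and $\partial_\nu v^h=0$ on $\tilde\Gamma$). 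Continuing $w=v^h+iu$ into $\mathrm{int}\,M$, the harmonic function $\Re w^2=(v^h)^2-u^2$ agrees with the single-valued $v^\psi$ on an open set, hence agrees with it everywhere by unique continuation; thus $\Re w^2$ is itself single-valued, so the monodromy $u\mapsto u+P$ around any loop must satisfy $2Pu+P^2=0$, forcing $P=0$ because $u$ is nonconstant. Therefore all periods vanish, $u$ is a single-valued conjugate, and $w=v^h+iu$ is holomorphic on $M$ with $\Re w^2=v^\psi$ throughout.

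Finally I would verify $u=0$ on $\tilde\Gamma$. As $\partial_\nu v^h=0$ there, $u$ is constant $c_j$ on each $\Gamma_j$; evaluating $\partial_\nu\Re w^2=\partial_\nu v^\psi=0$ on $\tilde\Gamma$ yields $-2c_j\partial_\nu u=0$, while $\partial_\nu u=-\partial_\gamma v^h\not\equiv 0$ (otherwise $v^h$ would be both Dirichlet- and Neumann-constant on $\Gamma_j$, forcing $h$ constant). Hence $c_j=0$, so $w\in\mathfrak{A}_{*}^{\infty}(M)$ and $v^h=\Re w\in\Re(\mathfrak{A}_{*}^{\infty}(M))$, which is condition $1.$ I expect the continuation/monodromy step of the third paragraph to carry the entire weight of the proof, the local Cauchy-uniqueness step and the algebraic rearrangement being routine transcriptions of Lemma~\ref{criteria grounded prop}.
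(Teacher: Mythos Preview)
Your proof is correct and matches the paper in the direction $1.\Rightarrow 2.$, in the local Cauchy-uniqueness step, and in the final argument that $u|_{\tilde\Gamma}=0$. The genuine difference is in how you globalize the conjugate $u$. The paper covers $M$ by simply connected discs $U$ that all meet the fixed arc $\Gamma'\subset\Gamma_0$, defines $u(x):=u_U(x)$, and checks consistency on overlaps directly: from $u_{U_1}^2=(v^h)^2-v^{h^2-f^2}=u_{U_2}^2$ one gets $u_{U_1}=\pm u_{U_2}$, and the sign is fixed to $+1$ by $\nabla u_{U_1}=\Phi\nabla v^h=\nabla u_{U_2}$. You instead treat $u$ as an a priori multivalued conjugate of $v^h$, identify $u^2$ globally with the single-valued function $(v^h)^2-v^{\psi}$ via unique continuation, and then kill every period $P$ by the identity $2Pu+P^2=0$ (valid since $u$ is nonconstant). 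Both routes are sound; the paper's avoids any talk of monodromy or covers but leans on the existence, for every $x\in M$, of a disc-type neighbourhood reaching back to $\Gamma'$, while your period argument is closer to the classical Riemann-surface viewpoint and makes the topological obstruction (and its vanishing) explicit. Your preliminary observation that $\oint_{\Gamma_0}\partial_\nu v^h\,ds=0$ is in fact subsumed by the monodromy argument, but it is a clean way to anchor the normalization of $u$ before continuing.
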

\begin{proof}
Obviously, the statement is valid for $h={\rm const}$. In what
follows we deal with $h\ne {\rm const}$.

${\bf 1.\Rightarrow 2.}$\,\,\,Let the function $w=v^{h}+iu$ belong
to $\mathfrak{A}^{\infty}_{*}(M)$; then $u=0$ on $\tilde{\Gamma}$.
Denote $f=u|_{\Gamma_{0}}$. The function $w^{2}$ is holomorphic in
$M$ and, hence, $\Re w^{2}=(v^{h})^{2}-u^{2}$ is harmonic in $M$.
Moreover, $\partial_{\nu}v^{h}=\partial_{\gamma}u=0$ on
$\tilde{\Gamma}$, whence we have
$$\partial_{\nu}[(v^{h})^{2}-u^{2}]=2(v^{h}\partial_{\nu}v^{h}-u\partial_{\nu}u)=0\qquad \text{ on }\,\, \tilde{\Gamma}.$$
Thus, $(v^{h})^{2}-u^{2}=v^{h^{2}-f^{2}}$ holds and we have
\begin{equation}
\label{squaring trick 2} \Lambda_{g}^{\rm
is}(h^{2}-f^{2})=\partial_{\nu}[(v^{h})^{2}-u^{2}]=2(h\Lambda_{g}^{\rm
is}h-f\partial_{\nu}u)\quad \text{ on }\,\, \Gamma_{0}.
\end{equation}
In addition, the Cauchy-Riemann conditions $\Phi\nabla v^{h}=\nabla u$ in $M$ imply
\begin{equation}
\label{CR3}
\partial_{\gamma}f=\Lambda_{g}^{\rm is}h, \quad \partial_{\nu}u=-\partial_{\gamma}h \quad \text{ on }\,\, \Gamma_{0}.
\end{equation}
The first equality of (\ref{CR3}) means that $f=J\Lambda_{g}^{\rm
is}h+c_{h}$ with some $c_{h}\in\mathbb{R}$. Taking into account
this and the second equality of (\ref{CR3}), we rewrite
(\ref{squaring trick 2}) as
 $$
\Lambda_{g}^{\rm is}(h^{2}-(J\Lambda_{g}^{\rm
is}h+c_{h})^{2})=2[\,h\Lambda_{g}^{\rm is}h+(J\Lambda_{g}^{\rm
is}h+c_{h})\partial_{\gamma}h\,].
 $$
Since $\Lambda_{g}^{\rm is}(c_{f}^{2})=0$, the latter implies
(\ref{criteria isolated}).

${\bf 2.\Rightarrow1.}$\,\,\, Since $\Lambda_{g}^{\rm
is}(c_{h}^{2})=0$, (\ref{criteria isolated}) can be rewritten as
\begin{equation}
\label{squaring trick 3} \Lambda_{g}^{\rm
is}(h^{2}-f^{2})=2(h\Lambda_{g}^{\rm is}h+f\partial_{\gamma}h),
\end{equation}
where $f:=J\Lambda_{g}^{\rm is}h+c_{h}$. Choose a segment
$\Gamma'\subset\Gamma_{0}$ of positive length and an arbitrary
neighbourhood $U\subset M$ diffeomorphic to a disc in
$\mathbb{R}^{2}$ and such that $\partial U\supset\Gamma'$. By the
Poincare Theorem, there is a harmonic in $U$ function $u_{U}$
provided $\Phi\nabla v^{h}=\nabla u_{U}$ in $U$; such a function
is defined up to a constant. In particular,
\begin{equation}
\label{CR4}
\partial_{\gamma}u_{U}=\partial_{\nu}v^{h}=\Lambda_{g}^{\rm is}h=\partial_{\gamma}f, \quad -\partial_{\gamma}h=\partial_{\nu}u_{U}\quad \text{ on } \Gamma'.
\end{equation}
In view of the first equality of (\ref{CR4}), $u_{U}$ can be
chosen is such a way that $u_{U}=f$ on $\Gamma'$. The second
equality of (\ref{CR4}) implies
\begin{equation}
\label{squaring trick 4}
\partial_{\nu}[(v^{h})^{2}-u_{U}^{2}]=2(h\Lambda_{g}^{\rm is}h-f\partial_{\nu}u_{U})=2(h\Lambda_{g}^{\rm is}h+f\partial_{\gamma}h) \text{ on } \Gamma'.
\end{equation}
Comparing (\ref{squaring trick 4}) with (\ref{squaring trick 3}), we obtain
$$\partial_{\nu}((v^{h})^{2}-u_{U}^{2})=\Lambda_{g}^{\rm is}(h^{2}-f^{2})=\partial_{\nu}v^{h^{2}-f^{2}} \text{ on } \Gamma'.$$
Since $v^{h}+iu_{U}:=w_{U}$ is holomorphic in $U$, the function
$w_{U}^{2}$ is also holomorphic in $U$ and, hence, $\Re
w_{U}=(v^{h})^{2}-u_{U}^{2}$ is harmonic in $U$. So,
$(v^{h})^{2}-u_{U}^{2}$ and $v^{h^{2}-f^{2}}$ satisfy the
Laplace-Beltrami equation in $U$ and have the same Cauchy data on
$\Gamma'$. Thus, $(v^{h})^{2}-u_{U}^{2}=v^{h^{2}-f^{2}}$ in $U$
due to the uniqueness of the solution to the Cauchy problem for
the second order elliptic equations.

Introduce the function $u$ on $M$ by the rule $u(x):=u_{U}(x)$,
where $U$ be a neighbourhood diffeomorphic to a disc in
$\mathbb{R}^{2}$ and such that $x\in U$ and $\partial
U\supset\Gamma'$. Let us check that such a definition does not
depend on the choice of $U$. Suppose that $U_{1},U_{2}$ are
neighbourhoods diffeomorphic to a disc in $\mathbb{R}^{2}$ and
such that $\partial U_{1,2}\supset\Gamma'$. Then, as is shown
above, $u_{U_{1}}^{2}=(v^{h})^{2}-v^{h^{2}-f^{2}}=u_{U_{2}}^{2}$
and, hence, $u_{U_{1}}=su_{U_{2}}$ on $U_{1}\cap U_{2}$, where
$s=1$ or $s=-1$. Since, the Cauchy-Riemann conditions
$$\nabla u_{U_{1}}=\Phi\nabla v^{h}=\nabla u_{U_{2}}=s\nabla u_{U_{1}}$$
hold on $U_{1}\cap U_{2}$ and $h$, $v^{h}$ are non-constant, the
last formula yields $s=1$. Thus, $u_{U_{1}}=u_{U_{2}}$ on
$U_{1}\cap U_{2}$. So, $u$ is well defined and the function
$w:=v^{h}+iu$ is holomorphic in $M$. From the Cauchy-Riemann
conditions $\Phi\nabla v^{h}=\nabla u$ it follows that
$\partial_{\gamma}u=0$ on $\tilde{\Gamma}$. The latter implies
that $\partial_{\nu}u$ is not identical zero on $\Gamma_{j}$ for
any $j=1,\dots,m$. Indeed, the opposite means that $\nabla u=0$ on
$\Gamma_{j}$ and then $u={\rm const}$ on $M$ due to the uniqueness
of the solution to the Cauchy problem for the second order
elliptic equations. Then $\nabla v^{h}=-\Phi\nabla u=0$ in $M$,
which contradicts the assumption $h\ne {\rm const}$ on
$\Gamma_{0}$. As a result, the equality
$$0=\partial_{\nu}v^{h^{2}-f^{2}}=\partial_{\nu}((v^{h})^{2}-u^{2})=2(v^{h}\partial_{\nu}v^{h}-u\partial_{\nu}u)=2u\partial_{\nu}u \text{ on } \tilde{\Gamma}$$
yields $u|_{\tilde{\Gamma}}=0$.
\end{proof}
Lemma \ref{criteria isolated prop} leads to the representation
\begin{equation}
\label{hol func via DN is} {\rm
Tr}_{\Gamma_{0}}\mathfrak{A}_{*}^{\infty}(M)=\{\eta=h+i(J\Lambda_{g}^{\rm
is}h+c_{h})H_{h} \ | \ h\in C^{\infty}(\Gamma_{0}) \text{ and }
c_{h}\in\mathbb{R} \text{ obey (\ref{criteria isolated})}\},
\end{equation}
where $H_{\rm const}=0$ and $H_{f}=1$ for any nonconstant $f$.
From (\ref{hol func via DN gr}) and (\ref{hol func via DN is}) it
follows that
\begin{equation}
\label{tr alg via DNs}
\begin{split}
{\rm Tr}_{\Gamma_{0}}\mathbb{A}^{\infty}(M)&=\{\eta=f_{1}+J\Lambda_{g}^{\rm gr}f_{2}+i(J\Lambda_{g}^{\rm gr}f_{1}-f_{2})+c \ | \ \\
&\qquad\qquad f_{1},f_{2}\in C^{\infty}(\Gamma_{0}) \text{ obey (\ref{criteria grounded})}, \ c\in\mathbb{C}\}=\\
&=\{\eta=h_{1}-J\Lambda_{g}^{\rm is}h_{2}-c_{2}+i(J\Lambda_{g}^{\rm is}h_{1}+c_{1}+h_{2}) \ | \ \\
&h_{k}\in C^{\infty}(\Gamma_{0}), \ c_{k}\in\mathbb{R}; \ (h,c_{h}):=(h_{k},c_{k}) \text{ obey (\ref{criteria isolated})}\}.
\end{split}
\end{equation}
Since ${\rm Tr}_{\Gamma_{0}}\mathbb{A}^\infty(M)$ is dense in
${\rm Tr}_{\Gamma_{0}}\mathbb{A}(M)$, we arrive at the following
important fact.
 \begin{prop}\label{Prop important}
Each of the operators $\Lambda_{g}^{\rm gr}$ and $\Lambda_{g}^{\rm
is}$ does determine the algebra ${\rm
Tr}_{\,\Gamma_{0}}\mathbb{A}(M)$ via (\ref{tr alg via DNs}).
 \end{prop}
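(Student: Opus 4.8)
The plan is to reduce the proposition entirely to the two preceding lemmas combined with the defining decomposition $\mathbb{A}^{\infty}(M)=\mathfrak{A}_{*}^{\infty}(M)+i\,\mathfrak{A}_{*}^{\infty}(M)$, which is (\ref{def mathbb A(M)}) restricted to smooth elements and in which the representation of each element is unique. Since the trace map (\ref{tr 1}) is an algebra isomorphism onto its image (by uniqueness of analytic continuation, as observed right after (\ref{tr 1})), applying it to this decomposition yields
$$
{\rm Tr}_{\Gamma_{0}}\mathbb{A}^{\infty}(M)={\rm Tr}_{\Gamma_{0}}\mathfrak{A}_{*}^{\infty}(M)+i\,{\rm Tr}_{\Gamma_{0}}\mathfrak{A}_{*}^{\infty}(M).
$$
Thus the whole task is to substitute the two descriptions of ${\rm Tr}_{\Gamma_{0}}\mathfrak{A}_{*}^{\infty}(M)$ furnished by Lemmas \ref{criteria grounded prop} and \ref{criteria isolated prop}, namely (\ref{hol func via DN gr}) and (\ref{hol func via DN is}), and to read off that the resulting set is computable from $\Lambda_{g}^{\rm gr}$ (respectively $\Lambda_{g}^{\rm is}$) alone.

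First I would take (\ref{hol func via DN gr}), write two independent elements $\eta_{1}=J\Lambda_{g}^{\rm gr}f_{1}-if_{1}+c_{1}$ and $\eta_{2}=J\Lambda_{g}^{\rm gr}f_{2}-if_{2}+c_{2}$, with $f_{1},f_{2}$ obeying (\ref{criteria grounded}) and $c_{1},c_{2}\in\mathbb{R}$, and expand $\eta_{1}+i\eta_{2}$. Collecting real and imaginary parts and relabeling $f_{1}\leftrightarrow f_{2}$ reproduces the first line of (\ref{tr alg via DNs}) with $c=c_{1}+ic_{2}\in\mathbb{C}$; every ingredient here ($\Lambda_{g}^{\rm gr}$, $J$, $\partial_{\gamma}$, and the constraint (\ref{criteria grounded})) is determined by $\Lambda_{g}^{\rm gr}$. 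Repeating the computation with (\ref{hol func via DN is}), now using $\eta_{k}=h_{k}+i(J\Lambda_{g}^{\rm is}h_{k}+c_{k})H_{h_{k}}$ with $(h_{k},c_{k})$ obeying (\ref{criteria isolated}), and expanding $\eta_{1}+i\eta_{2}$ gives, after collecting terms, the second line of (\ref{tr alg via DNs}), expressed entirely through $\Lambda_{g}^{\rm is}$. Both expansions are routine bookkeeping; the only points needing attention are the relabeling in the grounded case, the tracking of the real constants $c_{k}$, and the convention $H_{\rm const}=0$, which removes the redundancy coming from constant $h_{k}$ and thereby makes the second description a genuine set equality. That the two lines describe the \emph{same} set is then automatic, since each equals ${\rm Tr}_{\Gamma_{0}}\mathbb{A}^{\infty}(M)$.

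Finally I would pass from the smooth subalgebra to its closure. Because $\mathbb{A}^{\infty}(M)$ is dense in $\mathbb{A}(M)$ and ${\rm Tr}_{\Gamma_{0}}$ is an isomorphism, ${\rm Tr}_{\Gamma_{0}}\mathbb{A}^{\infty}(M)$ is dense in ${\rm Tr}_{\Gamma_{0}}\mathbb{A}(M)$ in the norm (\ref{normdef}) taken over $\Gamma_{0}$. That norm, together with the involution (\ref{invdef+}), is defined purely through the values of the functions on $\Gamma_{0}$ and is therefore computable once the elements are known as functions on $\Gamma_{0}$; hence ${\rm Tr}_{\Gamma_{0}}\mathbb{A}(M)={\rm clos}\,{\rm Tr}_{\Gamma_{0}}\mathbb{A}^{\infty}(M)$ is recovered from either DN map, as claimed. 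At the level of this proposition there is no serious analytic difficulty, since the substantial work — the squaring trick and the Cauchy-problem uniqueness argument — has already been carried out in the two lemmas; the only step deserving care is confirming that the passage to the closure is performed in a topology intrinsic to $\Gamma_{0}$, so that the property of being "determined by the DN map" is preserved in the limit.
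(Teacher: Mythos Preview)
Your proposal is correct and follows essentially the same route as the paper: the proposition is drawn as a direct corollary of Lemmas~\ref{criteria grounded prop} and~\ref{criteria isolated prop} via the decomposition ${\rm Tr}_{\Gamma_{0}}\mathbb{A}^{\infty}(M)={\rm Tr}_{\Gamma_{0}}\mathfrak{A}_{*}^{\infty}(M)+i\,{\rm Tr}_{\Gamma_{0}}\mathfrak{A}_{*}^{\infty}(M)$, yielding (\ref{tr alg via DNs}), followed by the density passage to ${\rm Tr}_{\Gamma_{0}}\mathbb{A}(M)$. Your explicit remark that the closure is taken in a topology intrinsic to $\Gamma_{0}$ is a welcome clarification that the paper leaves implicit.
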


\subsubsection*{Manifold $\mathbb M$}

\noindent{$\bullet$}\,\,\,Take two examples $M_+:=M\times\{+1\}$
and $M_-:=M\times\{-1\}$ of $M$, and factorize $M_+\cup\,M_-$ by
the equivalence
 $$
\begin{cases}
x\times\{+1\} \sim  x\times\{+1\}  &{\rm
if}\,\,x\in M\setminus\tilde\Gamma,\\
x\times\{-1\} \sim  x\times\{-1\}   &{\rm
if}\,\,x\in M\setminus\tilde\Gamma,\\
x\times\{+1\} \sim  x\times\{-1\}   &{\rm if}\,\,x\in\tilde\Gamma
\end{cases}
 $$
(i.e., the examples are glued along $\tilde\Gamma$). Let
$\mathbb{M}$ be the factor space, $x_\pm\in\mathbb M$ the
equivalence classes of $x\times\{\pm1\}$; denote
 $$
\mathbb M_\pm:=\{x_\pm\,|\,\,x\in M_\pm\},\quad \mathbb
M_0:=\{x_0\,|\,\,x\in\tilde\Gamma\}\,,
 $$
where $x_0:=x_+=x_-$ for $x\in\tilde\Gamma$. The factor space is
endowed with the projection $\pi:\mathbb M\to M$ and  the
involution $\tau:\mathbb M\to\mathbb M$ by
 $$
\pi: x_\pm \mapsto x\quad\text{and}\quad\tau: x_\pm\mapsto x_\mp.
 $$
Each $x_0$ is a fixed point of $\tau$, i.e.,
 \begin{equation}\label{fixed point}
\tau(x_0)=x_0,\qquad x_0\in\mathbb M_0
 \end{equation}
holds. By construction, $\{\mathbb{M},\pi\}$ is a covering of $M$.
\smallskip

\noindent{$\bullet$}\,\,\, Here we endow $\mathbb M$ with the
relevant metric and analytic structure.

${\rm a})$\,\,Choose a smooth atlas on $M$ consisting of the
charts of the following two types. The charts $(U_{j},\phi_{j})$
of the first type obey
$\overline{U_{j}}\cap\tilde{\Gamma}=\varnothing$, whereas
$\phi_{j}(U_{j})$ belongs to the half-plane
$\Pi_{+}=\{(x_{1},x_{2}) \ | \ x_{2}>0\}$. The charts
$(U_{j},\phi_{j})$ of the second type are chosen so that
$\phi_{j}(U_{j})\subset\Pi_{+}$ holds and
$\overline{U_{j}}\cap\tilde{\Gamma}$ is a segment $\sigma_{j}$ of
$\tilde{\Gamma}$ such that $\phi_{j}(\sigma_j)$ is a segment
$[a_j,b_j]$ of the axis $x_{2}=0$.

Now, let us construct the smooth atlas on $\mathbb{M}$. For the
chart $(U_{j},\phi_{j})$ of the first type on $M$, one construct
two charts $(\mathbb{U}^{\pm}_{j},\phi^{\pm}_{j})$ on
$\mathbb{M}$, where $\mathbb
U_{j}^{\pm}=\pi^{-1}(U_{j})\cap\mathbb{M}_{\pm}$,
$\phi_{j}^{+}=\phi_{j}\circ\pi$, and
$\phi_{j}^{-}=\kappa\circ\phi_{j}\circ\pi$; here
$\kappa(x_{1},x_{2}):=(x_{1},-x_{2})$. Each chart
$(U_{j},\phi_{j})$ of the second type on $M$ determines the chart
$(\mathbb{U}^{0}_{j},\phi^{0}_{j})$ on $\mathbb{M}$, where
$\mathbb{U}^{0}_{j}=\pi^{-1}(U_{j}\cup\sigma_{j})$ and
\begin{eqnarray*}
\phi_{j}^{0}(x):=\left\{ \begin{array}{ll}
\phi_{j}\circ\pi, & x\in\mathbb{U}^{0}_{j}\cap\mathbb{M}_{+},\\
\kappa\circ\phi_{j}\circ\pi(x), &
x\in\mathbb{U}^{0}_{j}\cap\mathbb{M}_{-}.
\end{array}\right.
\end{eqnarray*}
As is easy to verify, all together the charts
$(\mathbb{U}^{\pm}_{j},\phi^{\pm}_{j})$ and
$(\mathbb{U}^{0}_{j},\phi^{0}_{j})$ make up a smooth atlas on
$\mathbb M$. So, we have the smooth manifold $\mathbb M$ with the
boundary
 $$
\partial\mathbb
M=\Gamma^+_0\cup\Gamma^-_0,\quad\text{where}\quad\Gamma^\pm_0:=\pi^{-1}(\Gamma_0\times\{\pm
1\}).
 $$

${\rm b})$\,\,Next, $\mathbb{M}$ is endowed with the metric ${\rm
g}:=\pi_{*}g$. Such a metric is invariant with respect to the
involution $\tau$, i.e., obeys $\tau_{*}{\rm g}={\rm g}$. The
metric ${\rm g}$ is smooth outside $\mathbb{M}_{0}$. However, on
the whole $\mathbb{M}$ it is only Lipschitz continuous in view of
possible break of smoothness on $\mathbb{M}_{0}$.

Given metric $\rm g$, one defines the continuous field of
rotations $\dot{\Phi}$ (which is a tensor field on $\mathbb{M}$)
such that $\dot{\Phi}|_{\mathbb{M}_{+}}=\pi_{*}\Phi$ and
$\tau_{*}\dot{\Phi}=-\dot{\Phi}$ hold.
\smallskip

${\rm c})$\,\,\,One can endow the manifold $M$ with a {\it
biholomorphic} atlas as follows. Let $x$ be an arbitrary point of
$\mathbb{M}$, and $(\mathbb{U},\psi)$ is a chart obeying $x\in
\mathbb{U}$. By the Vecua theorem (see Chapter 2, \cite{Vekua}),
there exists a chart $(\mathbb{U}_{x},\psi_{x})$, $x\in
\mathbb{U}_{x} \subset \mathbb{U}$ with {\it isothermal
coordinates} $\psi_{x}$ corresponding to the metric ${\rm g}$ and
such that $\psi_{x}\circ\psi^{-1}$ and $\psi\circ\psi_{x}^{-1}$
are continuously differentiable. In these coordinates, the metric
tensor ${\rm g}$ is of the form ${\rm
g}^{ij}=\rho(\cdot)\delta^{ij}$ with a Lipschitz $\rho>0$.
Respectively, assuming $(\mathbb{U},\psi)$ to be properly oriented
by $\dot\Phi$, the matrix $\dot\Phi_k^l$ in the isothermal
coordinates takes the form
$\dot{\Phi}^{1}_{1}=\dot{\Phi}^{2}_{2}=0$,
$\dot{\Phi}^{1}_{2}=-\dot{\Phi}^{2}_{1}=1$.

As a consequence, if ${\rm w}=v+iu$ satisfies $\Delta_{\rm
g}v=\Delta_{\rm g}u=0$ and $\nabla_{\rm g} u=\dot\Phi\nabla_{\rm
g} v$ in $\mathbb U_x$, then the functions $v\circ\psi_x^{-1}$ and
$u\circ\psi_x^{-1}$ turn out to be harmonic and connected via the
Cauchy-Riemann conditions in $\psi_x(\mathbb U_x)\subset\mathbb
R^2$. Thus, ${\rm w}\circ\psi_x^{-1}$ is holomorphic in
$\psi_x(\mathbb U_x)\subset\mathbb C$ in the classical sense.
Respectively, $\rm w$ is said to be {\it holomorphic in} $\mathbb
U_x$.

At last, covering $\mathbb M$ by the properly oriented isothermal
charts $(\mathbb{U}_{x},\psi_{x})$ and easily checking their
compatibility, one obtains the {\it biholomorphic} atlas
consistent with the metric ${\rm g}$.  As a result, one gets the
possibility to speak about holomorphic functions on $\mathbb{M}$.

\subsubsection*{Algebra on $\mathbb M$}

\noindent{$\bullet$}\,\,\,The basic element of our approach is the
algebra of holomorphic functions
 $$
\mathfrak{A}(\mathbb{M}):=\left\{{\rm w}={\rm v}+i{\rm u} \ | \
{\rm u},{\rm v}\in C(\mathbb{M}); \ \nabla {\rm
u}={\dot\Phi}\nabla {\rm v}\, \text{ in } {\rm
int\,}\mathbb{M}\right\}.
 $$
Endowed with the norm $\parallel {\rm
w}\parallel=\sup_{\mathbb{M}}|{\rm w}|$, algebra
$\mathfrak{A}(\mathbb{M})$ is a closed subalgebra of
$C(\mathbb{M};\mathbb{C})$. The set
$\mathfrak{A}^{\infty}(\mathbb{M})=\mathfrak{A}(\mathbb{M})\cap
C^{\infty}(\mathbb{M})$ is dense in $\mathfrak{A}(\mathbb{M})$.

It is easy to see that the map
$${\rm w}\rightarrow{\rm w}^{\star}:=\overline{{\rm w}\circ\tau}$$
is an involution on $\mathfrak{A}(\mathbb{M})$, and
$\mathfrak{A}(\mathbb{M})$ is an involutive Banach algebra.
Defining $$\mathfrak{A}_{\star}(\mathbb{M}):=\{{\rm
w}\in\mathfrak{A}(\mathbb{M})\,|\,\,{\rm w}^\star={\rm w}\}$$ the
set of Hermitian elements of $\mathfrak{A}(\mathbb{M})$, we have
$\mathfrak{A}(\mathbb{M})=\mathfrak{A}_{\star}(\mathbb{M})+i\mathfrak{A}_{\star}(\mathbb{M})$.
Each ${\rm w}\in\mathfrak{A}(\mathbb{M})$ is representated in the
form ${\rm w}={\rm w}_{1}+i{\rm w}_{2}$, where ${\rm w}_{1}=({\rm
w}+{\rm w}^{\star})/2$ and ${\rm w}_{2}=({\rm w}-{\rm
w}^{\star})/2i$ are hermitian.
\smallskip

\noindent{$\bullet$}\,\,\,By (\ref{fixed point}), one has ${\rm
w}^{\star}=\overline{{\rm w}}$ on $\mathbb M_0$, and any function
${\rm w}\in\mathfrak{A}_{\star}(\mathbb{M})$ is {\it real} on
$\mathbb M_0$. Therefore, restricting ${\rm
w}\in\mathfrak{A}_{\star}(\mathbb{M})$ on $\mathbb M_+$, one gets
${\rm w}|_{\mathbb M_+}=w\circ \pi$ with $w\in\mathfrak{A}_{*}(M)$
(see (\ref{hermelll}) for definition). The converse is also true:
in view of $\tau_{*}{\rm g}={\rm g}$, any function of the form
$w\circ\pi$ with $w\in\mathfrak{A}_{*}(M)$, which is given on
$\mathbb M_+$, admits a (unique) holomorphic continuation ${\rm
w}$ to $\mathbb{M}$ and ${\rm w}(\tau(x))=\overline{{\rm w}(x)}$
holds for all $x\in \mathbb M$. Thus, the algebra $\mathbb{A}(M)$
defined in (\ref{def mathbb A(M)}) and the algebra
$\mathfrak{A}(\mathbb{M})$ are related via the restriction:
 $$
\mathfrak{A}(\mathbb{M})\big|_{\mathbb M_+}:=\{{\rm w}|_{\mathbb
M_+}\,|\,\,{\rm
w}\in\mathfrak{A}(\mathbb{M})\}\,=\,\{w\circ\pi\,|\,\,w\in\mathbb{A}({M})\}=:\mathbb{A}({M})\circ\pi\,.
 $$
In the meantime, by definition (\ref{normdef}), one has
\begin{align*}
&\parallel {\rm w}_{1}+i{\rm w}_{2}\parallel_{C(\mathbb M;\mathbb
C)}=
\max\{\parallel{\rm w}_{1}+i{\rm w}_{2}\parallel_{C(\mathbb M;\mathbb{C})},\parallel{\rm w}_{1}\circ\tau+i{\rm w}_{2}\circ\tau\parallel_{C(\mathbb M;\mathbb{C})}\}=\\
&=\max\{\parallel{\rm w}_{1}+i{\rm w}_{2}\parallel_{C(\mathbb
M;\mathbb{C})},\parallel\overline{{\rm w}_{1}-i{\rm
w}_{2}}\parallel_{C(\mathbb M;\mathbb{C})}\}=|||w_{1}+iw_{2}|||,
\end{align*}
where ${\rm w}_{1},{\rm w}_{2}$ are elements of
$\mathfrak{A}_{\star}(M)$, whereas $w_{1},w_{2}$ are their
`restrictions':  ${\rm w}_{1,2}|_{_{\mathbb
M_+}}=w_{1,2}\circ\pi$. Similarly, (\ref{invdef}) implies $({\rm
w}_{1}+i{\rm w}_{2})^{\star}|_{\mathbb M}=[w_{1}-iw_{2}]\circ\pi$.
Thus, the restriction provides an isometric isomorphism of the
involutive Banach algebra $\mathfrak{A}(\mathbb{M})$ onto the
involutive Banach algebra $\mathbb{A}(M)$.
\smallskip

\noindent{$\bullet$}\,\,\,In accordance with the maximal principle
for holomorphic functions, each ${\rm
w}\in\mathfrak{A}(\mathbb{M})$ obeys $\sup_{\mathbb{M}}|{\rm
w}|=\sup_{\partial\mathbb{M}}|{\rm w}|$. Therefore, the map
$$\mathfrak{A}(\mathbb{M})\ni{\rm w}\stackrel{\rm Tr}{\longrightarrow}\eta:=
{\rm w}|_{\partial\mathbb{M}}\in
C(\partial\mathbb{M};\mathbb{C})$$ is an isometry on its image. In
the meantime, obviously, ${\rm Tr}$ is a homomorphism of algebras.
Thus, ${\rm Tr\,}{\mathfrak A}({\mathbb M})$ is a closed
subalgebra of $C(\partial\mathbb{M};\mathbb{C})$, which is
isometrically isomorphic to $\mathfrak{A}(\mathbb{M})$ via the map
${\rm Tr}$. The involution on $\mathfrak{A}(\mathbb{M})$ induces
the corresponding involution
$\eta\rightarrow\eta^{\star}:=\overline{\eta\circ\tau}$ on ${\rm
Tr\,}{\mathfrak A}({\mathbb M})$.

Next, let ${\rm w}={\rm w}_{1}+i{\rm w}_{2}$, where ${\rm
w}_{1},{\rm w}_{2}\in\mathfrak{A}_{\star}(\mathbb M)$ obey ${\rm
w}_{1,2}|_{\mathbb M_+}=w_{1,2}\circ\pi$ with $w_{1,2}\in\mathbb
A(M)$. Then one has
 $$
{\rm
w}|_{\Gamma_0^+}=[(w_{1}+iw_{2})|_{\Gamma_0}]\circ\pi\quad\text{and}\quad
{\rm
w}^{\star}|_{\Gamma_0^+}=[(w_{1}-iw_{2})|_{\Gamma_0}]\circ\pi\circ\tau.
 $$
Thus, the map $\eta\rightarrow\eta|_{\Gamma_{0}}$ is an isometric
isomorphism from ${\rm Tr\,}{\mathfrak A}({\mathbb M})$ onto ${\rm
Tr}_{\Gamma_{0}}\mathbb{A}(M)$. In particular, ${\rm
Tr}_{\Gamma_{0}}\mathbb{A}(M)$ is a Banach algebra, which is
isometrically isomorphic to $\mathbb{A}(M)$ via the map (\ref{tr
1}). Moreover, one has
$${\rm Tr\,}{\mathfrak A}({\mathbb M})={\rm clos}_{C(\partial\mathbb{M},\mathbb{C})}{\rm Tr\,}{\mathfrak A}^{\infty}({\mathbb M}),
\quad {\rm Tr}_{\Gamma_{0}}\mathbb{A}(M)={\rm
clos}_{C(\Gamma_{0},\mathbb{C})}{\rm
Tr}_{\Gamma_{0}}\mathbb{A}^{\infty}(M)$$ by the density of the
corresponding smooth subalgebras.
\smallskip

\noindent{$\bullet$}\,\,\,Summarizing and denoting by ${\mathfrak
A}\cong{\mathfrak B}$ the isometrically  isomorphic involutive
algebras, we arrive at the following scheme of determination of
${\mathfrak A}({\mathbb M})$ from the DN-map:
\begin{equation}\label{recover}
\Lambda_{g}^{\rm gr}\text{ or } \Lambda_{g}^{\rm
is}\overset{(\ref{tr alg via DNs})}\Rightarrow {\rm
Tr}_{\Gamma_{0}}\mathbb{A}^{\infty}(M)\overset{{\rm
clos}_{C(\Gamma_{0},\mathbb{C})}}{\Longrightarrow}{\rm
Tr}_{\Gamma_{0}}\mathbb{A}(M)\cong\mathbb{A}(M)\cong{\mathfrak
A}({\mathbb M})\,.
\end{equation}
Thus, $\Lambda_{g}$ determines the algebra
$\mathfrak{A}(\mathbb{M})$ up to an isometric isomorphism.

\subsubsection*{Determination of $\{M,g\}$}
\noindent$\bullet$\,\,\,Recall the well-known notions and facts
(see, e.g., \cite{Gamellin,Royden})

A {\it character} of the complex commutative Banach algebra
$\mathfrak{A}$ is a nonzero homomorphism
$\chi:\mathfrak{A}\to{\mathbb C}$. The set of all characters ({\it
spectrum} of $\mathfrak{A}$) is denoted by
$\widehat{\mathfrak{A}}$ and endowed with the canonical Gelfand
($*$-weak) topology.

The Gelfand transform ${\mathfrak A}\to
C(\widehat{\mathfrak{A}};{\mathbb C})$ maps $a\in{\mathfrak A}$ to
the function $\widehat{a}\in C(\widehat{\mathfrak{A}};\mathbb{C})$
by $\widehat{a}(\chi):=\chi(a)$. We write $S\cong T$ if the
topological spaces $S$ and $T$ are homeomorphic. If the algebras
are isometrically isomorphic (${\mathfrak A}\cong{\mathfrak B}$)
then their spectra are homeomorphic: $\widehat{\mathfrak
A}\cong\widehat{\mathfrak B}$.

For a function algebra $\mathfrak{F}\subset C(T;\mathbb{C})$, the
set $\mathscr{D}$ of the Dirac measures $\delta_t: a\mapsto
a(t),\,\,t\in T$ is a subset of $\widehat{\mathfrak{F}}$. This
algebra is called {\it generic} if
$\widehat{\mathfrak{F}}=\mathscr{D}$ holds, which is equivalent to
$\widehat{\mathfrak{F}}\cong T$. In this case, the algebra
$\mathfrak{F}$ is isometrically isomorphic to a subalgebra of
$C(\widehat{\mathfrak{F}};\mathbb{C})$, the isometry being
realized by the Gelfand transform $a\mapsto\widehat a$.

The algebra $\mathfrak{A}({\mathbb M})$ is generic
\cite{Royden,Stout}. Therefore, ${\mathfrak
A}(\mathbb{M})\cong{\rm Tr\,}{\mathfrak A}({\mathbb M})$ implies
 $$
\mathbb
{M}\cong\widehat{\mathfrak{A}(\mathbb{M})}\cong\widehat{{\rm
Tr}_{\Gamma_{0}}\mathbb{A}(M)}=:\mathbb{M}^{\,'}.
 $$
The fact of the crucial value for the subsequent is as follows. In
accordance with the scheme (\ref{recover}), each of the DN-maps
$\Lambda_{g}^{\rm gr}$ or $\Lambda_{g}^{\rm is}$ does determine
the spectrum ${\mathbb M}^{\,'}$.

The `space' involution $\tau$ on $\mathbb{M}$ induces the
involution $\tau'$ on the spectrum
$\widehat{\mathfrak{A}(\mathbb{M})}\cong \mathbb{M}^{\,'}$ by
$\tau':\delta_x\mapsto\delta_{\tau(x)}$, so that one has
\begin{equation}\label{inv ind}
[\tau'(\delta_{x})]({\rm w})=\delta_{\tau(x)}({\rm w})= {\rm
w}(\tau(x))=\overline{{\rm
w}^{\star}(x)}=\overline{\delta_{x}({\rm w}^{\star})}\quad
x\in\mathbb{M}, \ {\rm w}\in\mathfrak{A}(\mathbb{M}).
\end{equation}

\smallskip

\noindent$\bullet$\,\,\,Recall, what `to solve the EIT problem'
is. Assume that the DN-map $\Lambda_{g}^{\rm gr}$ (or
$\Lambda_{g}^{\rm is}$) of the unknown manifold $\{M,g\}$ is given
on $\Gamma_0\subset\partial M$. One needs to provide a manifold
$\{M',g'\}$ ({\it a copy} of the unknown $M$) such that $\partial
M'\supset\Gamma_0$ and $\Lambda_{g'}^{\rm gr}=\Lambda_{g}^{\rm
gr}$ (or $\Lambda_{g'}^{\rm is}=\Lambda_{g}^{\rm is}$) holds. We
claim and repeat that this is the only relevant understanding of
{\it to recover the unknown surface} \cite{BCald,B Sobolev Geom
Rings,B UMN}.
\smallskip

The required copy is constructed by means of the following
procedure. We describe it briefly, referring the reader to the
papers \cite{BCald,B Sobolev Geom Rings} for details. Note that,
starting the procedure, we have nothing but the operator
$\Lambda=\Lambda_{g}^{\rm gr}$ (or $\Lambda=\Lambda_{g}^{\rm is}$)
on $\Gamma_{0}$. However, we know a priori that $\Lambda$ is the
DN-map of some unknown Riemann surface.
\smallskip

{\it Step 1.}\,\,\,Given $\Lambda$, one recovers the metric $ds$
and the integration $J$ on $\Gamma_{0}$. Checking the relation
${\rm Ker\,}[I+(\Lambda J)^{2}]=\{0\}$, one detects the presence
(or absence) of other connected components of $\partial M$. We
also establish that $\Lambda=\Lambda_{g}^{\rm gr}$ if ${\rm Ker
\,}\Lambda=\{0\}$ or $\Lambda=\Lambda_{g}^{\rm is}$ if ${\rm
Ker\,}\Lambda=\{ {\rm const} \}$.
\smallskip

{\it Step 2.}\,\,\,The algebra $\{{\rm
Tr}_{\Gamma_{0}}\mathbb{A}(M),*\}$ is determined by (\ref{tr alg
via DNs}), the involution being determined by (\ref{invdef+}).
Then one finds its spectrum $\widehat{{\rm
Tr}_{\Gamma_{0}}\mathbb{A}(M)}=:\mathbb{M}^{\,'}$, which is a
homeomorphic copy of the unknown $\mathbb M$. Applying the Gelfand
transform
$${\rm Tr}_{\Gamma_{0}}\mathbb{A}(M)\ni \eta\,\mapsto\,\widehat{\eta}\in C(\mathbb{M}^{\,'};\mathbb{C})\,, $$
one gets the relevant copies $\widehat{\eta}$ of the (unknown)
holomorphic functions ${\rm w}^{\eta}$, which are supported on
${\mathbb M}$ and have the given traces ${\rm
w}^{\eta}|_{\Gamma_{0}}=\eta$.
\smallskip

{\it Step 3.}\,\,\,Using $\Re\widehat{\eta}$ and
$\Im\widehat{\eta}$ in capacity of the local coordinates on
${\mathbb M}^{\,'}$, one endows $\mathbb{M}^{\,'}$ with the
structure of a smooth 2d manifold.
\smallskip

{\it Step 4.}\,\,\, By Shilov \cite{Gamellin,BCald}, the boundary
$\partial {\mathbb M}^{\,'}$ is determined as the subset of
${\mathbb M}^{\,'}$, at which the functions $|\widehat\eta|$
attain the maximum. This boundary consists of two disjoint
connected components, which corresponds to the fact that
$\partial\mathbb M$ is $\Gamma^+_0\cup\Gamma^-_0$.

Then we identify the boundary points by
$$\partial{\mathbb M}\ni x\equiv\delta\in\partial {\mathbb M}^{\,'}\,\Leftrightarrow\,
\,{w}(x)=\widehat{w}(\delta)\quad\text{for all smooth}\,\,w $$
and, thus, attach $\partial {\mathbb M}^{\,'}$ to
$\partial{\mathbb M}$. Since the metric $ds$ on $\Gamma_{0}$ is
known, the mutual boundary $\partial {\mathbb M}\equiv\partial
{\mathbb M}^{\,'}$ is also endowed with the metric $ds'\equiv ds$.
\smallskip

{\it Step 5.}\,\,\, Endow ${\mathbb M}^{\,'}$ with the involution
$\tau'$ by the rule
 $$
[\tau'(\chi)](\eta):=\overline{\chi(\eta^*)},\qquad\eta\in {\rm
Tr}_{\Gamma_0}\mathbb A(M),\,\,\chi\in \mathbb M'.
 $$
It is a copy of the involution $\tau$ in ${\mathbb M}$ in view of
(\ref{inv ind}). By the same (\ref{inv ind}), the copy of
$\tilde{\Gamma}$ is the subset $\tilde{\Gamma'}$ of hermitian
characters in ${\mathbb M}^{\,'}$.
\smallskip

{\it Step 6.}\,\,\,At the moment, there is no metric on the
spectrum $\mathbb{M}^{\,'}$. In the meantime, it supports the
reserve of functions $\Re\widehat{\eta}$, $\Im\widehat{\eta}$,
which are the relevant copies of the (unknown) harmonic functions
$\Re{\rm w}_{\eta}$, $\Im{\rm w}_{\eta}$ on ${\mathbb M}$. As is
known, this reserve determines a metric $\tilde{\rm g}$ on
$\mathbb{M}^{\,'}$, which provides $\Delta_{\tilde{\rm
g}}\Re\widehat{\eta}=\Delta_{\tilde{\rm g}}\Im\widehat{\eta}=0$,
such a metric being determined up to a conformal deformation. In
\cite{LUEns,BCald} the reader can find concrete tricks for
determination of $\tilde{\rm g}$ (see also \cite{BV_CUBU_2}, page
16). One of them is to write the equations $\Delta_{\tilde{\rm
g}}\Re\widehat{\eta}=0$, $\Delta_{\tilde{\rm g}}\Im\hat{\eta}=0$
for a rich enough set $\eta=\eta_{1},\dots,\eta_n$ in local
coordinates and then use these equations as a system for finding
$\tilde{\rm g}_{ij}$ (up to a Lipshitz functional factor). Also,
it is easily seen that one can choose the metric $\tilde{\rm g}$
to be $\tau'$--invariant, i.e., obeying $\tau'_*\tilde{\rm
g}=\tilde{\rm g}$.

Let such a metric $\tilde{\rm g}$ be chosen. Find a smooth
positive function $\rho$ on ${\mathbb M}^{\,'}$ provided
$\rho=\rho\circ\tau'$ and such that the length element $ds''$ of
the metric $g''=\rho\tilde{\rm g}$ at $\partial {\mathbb M}^{\,'}$
coincides with the (known) element $ds'$.
\smallskip

{\it Step 7.}\,\,\, Denote by $M''$ some connected component of
$\mathbb{M}^{\,'}\backslash\tilde{\Gamma'}$; then
$M':=M''\cup\tilde{\Gamma'}$ is a homeomorphic copy of the unknown
original $M$. Denoting by $g'$ the restriction of the metric
$\tilde{\rm g}$ on $M'$, we obtain the manifold $(M',g')$, which
satisfies $\partial M'=\Gamma$ and $\Lambda_{g'}^{\rm
gr}=\Lambda_{g}^{\rm gr}$, $\Lambda_{g'}^{\rm is}=\Lambda_{g}^{\rm
is}$ by construction. The latter solves the EIT problem.
\bigskip

It is worth noting the following. In the course of solving the
problem by this procedure, we operate not with the attributes of
$M$ themselves (which is impossible in principle!), but create
their copies, and recover not the original $M$, but its relevant
copy $M'$. This view of what is happening is fully consistent with
the philosophy of the BC-method in inverse problems \cite{BCald,B
UMN}: the only reasonable understanding of `to restore unreachable
object' is to construct its (preferably, isomorphic) copy. Such a
point of view is supported by very general principles of the
system theory: see \cite{KFA}, Chapter 10.6, Abstract Realization
theory.

\end{document}